\newtheorem{theorem}{Theorem}
\newtheorem{definition}[theorem]{Definition}
\newtheorem{example}[theorem]{Example}
\newtheorem{remark}[theorem]{Remark} \newtheorem{cons}{Construction}
\renewcommand{\AA}{\mathcal{A}} \newcommand{\BB}{\mathcal{B}}
\newcommand{\E}{\mathsf{E}} 
\newcommand{\NN}{\mathbb{N}} 
\newcommand{\ZZ}{\mathbb{Z}} \newcommand{\LL}{\mathcal{L}}
 \newcommand{\sfT}{\mathcal{T}}
 \newcommand{\sfF}{\mathsf{F}}
\newcommand{\sfG}{\mathsf{G}} 
\newcommand{\sfX}{\mathsf{X}} 
\newcommand{\profile}{\sigma} \newcommand{\ebg}{\BB^{c,e}}
\newcommand{\enbuchi}{\mathsf{EnergyBuchi}}
\DeclareMathOperator{\pspace}{\mathsf{PSPACE}}
\DeclareMathOperator{\out}{\mathsf{Payoff}}
\DeclareMathOperator{\NE}{\mathsf{NE}}
\def\set#1{\{#1\}} 
\def\up#1{#1^\mathfrak{U}} \def\ch#1{#1^\mathfrak{C}}
\newcommand{\play}[1]{\langle #1 \rangle}
\newcommand{\cost}{\mathsf{cst}}
\title{Rational Verification in Iterated Electric Boolean
  Games\footnote{The first author is partially supported by the ANR
    project EQINOCS (ANR-11-BS02-004)}}
\author{Youssouf Oualhadj \institute{LACL, U-PEC\\ Paris, France}
  \email{youssouf.oualhadj@lacl.fr} \and Nicolas Troquard
  \institute{LACL, U-PEC\\ Paris, France}
  \email{nicolas.troquard@lacl.fr} }
\begin{document}

\maketitle

\begin{abstract}
  Electric boolean games are compact representations of games where the
  players have qualitative objectives described by LTL formulae and have limited
  resources. We study the complexity of several decision problems
  related to the analysis of rationality in electric boolean games
  with LTL objectives. In particular, we report that the problem of
  deciding whether a profile is a Nash equilibrium in an iterated
  electric boolean game is no harder than in iterated boolean games
  without resource bounds.  We show that it is a $\pspace$-complete
  problem. As a corollary, we obtain that both rational elimination
  and rational construction of Nash equilibria by a supervising
  authority are $\pspace$-complete problems.
\end{abstract}

\section{Introduction}
We study multiagent systems populated with self-interested agents who
interact repeatedly and are limited in their actions by a limited
amount of energy. We investigate the computational aspects of deciding
whether a collective, non-cooperative, behaviour is rational.

%% In our setting, the agents' possible actions are modelled
%% succinctly and their objectives are described by a temporal
%% property.

% Game theory has been widely used in the multi-agent community. Much
% attention has been devoted to systems where the decision makers
% interact simultaneously. On the one hand, the interaction considered
% can either be non-cooperative~\cite{} focusing on formal concepts
% such as Nash equilibria like concepts. On the other hand, it can be
% cooperative~\cite{} hence studying coalitions, bargaining, stables
% sets, etc~\cite{CooBoolGames}.

\paragraph{Electric boolean games} The formalism under consideration
was introduced in the second part of \cite{HarrensteinTW15} but the
decision problems were left open. They extend naturally the models of
multi-player boolean games \cite{DBLP:conf/ecai/BonzonLLZ06}, one-shot
electric games \cite{HarrensteinTW15}, and iterated boolean
games~\cite{GutierrezHW15}.  Boolean games have occupied an important
position in the recent formal AI literature.  This line of work is an
effort in formalisation of game theoretical situations with boolean
games (see previously cited work and
e.g.,~\cite{Wooldridge2013418,DBLP:conf/atal/DunneHKW08}).
% \mynote{Youssouf: explain why?}

% % They are not yet another framework of interaction between rational
% % agents.

% \mynote{Youssouf : Informal description of ebgs and comparison with
% existing models in TCS.}  In an Electric Boolean Games, agents are
% endowed an initial credit of resources. In order to satisfy a
% personal objective, agents have to perform a sequence of actions
% prescribed by a strategy. These actions can either cost or earn a
% fixed amount of resources. The aim of each agent is to satisfy its
% personal objective. The challenging part of this task lies in the
% fact that if an agent goes out of resource, every single task is
% failed.  Usually, in order to model open reactive systems, one
% specifies the controller as a player and the environment as an
% antagonist player thus obtaining a zero-sum game. A generalisation
% of this approach consists in using multiplayer games and using
% classical solution concepts such as Nash equilibria. Nevertheless,
% each agent is still, slef-interested with his own objective. The
% model of electric boolean games models a situation where each player
% has to fulfil his own objective while still cooperating with the
% other players.  \mynote{Youssouf : here}

Strategically, the players in Iterated Electric Boolean Games
(Sec.~\ref{sec:ebg}) are intricately mixing qualitative and
quantitative considerations. Not only do they need to find a strategy
that helps them satisfy their qualitative objective over time, they
need to do so, seeking to keep the interaction alive so as not to run
out of energy and fail to be able to perform a single action. This can
be illustrated by the next simple example.

\begin{example}\label{ex:spoiled-kids}
  Isabella and Jules are two demanding kids. Isabella's objective
  towards happiness is to be granted a new comic book on a regular
  basis, and Jules' objective is to be granted a new jigsaw puzzle
  just as often. Their mom's objective is naturally to have all
  requests eventually fulfilled. Whether they ask for a new item or
  not, it costs zero to the kids either way. They never incur any
  costs. Buying a new comic book however, will cost $\$4$ to their
  mother, and getting a new jigsaw puzzle will cost her $\$6$. Each
  day, each item that is not bought will earn Mom $\$1$. Isabella and
  Jules, being what they are, decide that their behaviour to satisfy
  their objective is to ask a new item all the time. Fortunately, Mom
  is going to cope with it by waiting $5$ days, buying a new comic
  book and a new jigsaw puzzle on the $6$-th day, and repeating. It
  results in a collective behaviour which is rational as we shall
  explain later on.
\end{example}

\paragraph{Boolean games as compact game representations}
Solving problems on an input only makes sense when the input is
reasonable.  Possible worlds and relational semantics are commonly
used to model multiagent systems. However, describing a complex system
in terms of possible worlds is often unpractical. In fact, the size of
the description of a system as a transition system typically grows
exponentially in the number of variables in the system.  For instance,
model checkers for Alternating-time Temporal Logic make use of
Reactive Modules~\cite{DBLP:conf/cav/AlurHMQRT98} or Interpreted
Systems~\cite{DBLP:conf/cav/LomuscioQR09} to overcome the
difficulty. The powers of agents and coalitions are derived from the
ability to control the value of some variables, thus bringing about
some change to the system.
Boolean games~\cite{harrenstein04phd, DBLP:journals/aamas/BonzonLLZ09}
are such compact representations which in addition also integrate
agents' preferences. They recently have been widely used to study
various phenomena relevant to artificial
intelligence~\cite{DBLP:conf/atal/DunneHKW08,DBLP:journals/ijar/BonzonLL09,
  DBLP:journals/ijait/BonzonDL10, DBLP:conf/ijcai/GrantKWZ11,
  Wooldridge2013418}.

Boolean Games are multi-player games where each player controls a set
of propositional variables and has a qualitative preference
represented by a propositional formula over the set of variables in
the system. An action for a player is to assign a valuation to the
propositional variables she controls. Iterated Boolean
Games~\cite{GutierrezHW15} are a variant of Boolean games where the
players repeat the interaction infinitely often, and where their
qualitative objectives are represented as LTL formulas over the set of
variables in the system.

Electric Boolean Games~\cite{HarrensteinTW15} are an extension of
Boolean Games where agents are assigned an initial energy endowment
and taking actions has a cost, positive or negative.  Already
in~\cite{HarrensteinTW15}, the authors define an iterated version of
Electric Boolean Games, but they do not investigate their strategic
aspects.

\paragraph{Design of safe computer systems}
In theoretical computer science, and particularly in the design and
verification of computer systems, two-player zero-sum games have been
extensively studied and used with great
success~\cite{AMP94,TA99}. Recently, researchers have brought their
attention to introducing quantitative restrictions for the
players. For instance games where the system has to accomplish a task
while maintaining its resource level above zero was modelled using Mean
payoff Parity games~\cite{CHJ05}, or Energy Parity
Games~\cite{CD12}. This line of work was naturally extended by the
study of the so-called multi-objective games with actual
implementation~\cite{BrazdilCFK15}. In a multi objective game, a
protagonist player wants to achieve a conjunction of goals, and the
antagonist player wants to achieve the exact opposite. Nevertheless,
the pessimistic assumption that a system and its environment always
have opposite interests is not always realistic. Therefore,
multiplayer games seem to be a more suitable
formalism~\cite{BrenguierCHPRRS16}. Indeed, the environment is
considered to be another player with her own goal. In order to study
those games, the solution concept of choice was Nash equilibria as it
is a sensible formalisation of rationality~\cite{BrihayePS13}.
% Unfortunately, even though an equilibrium may (always) exist, it may
% require the use of complex strategies. It can also lead a player to
% act in such a way that no player achieves their goal. This raises
% the algorithmic question of whether a good equilibrium exists? This
% question was considered, and more often than not the answer is
% ``no''. Indeed the stated problem is undecidable in many
% cases~\cite{}.
In an electric boolean game,
% very definition of payoffs makes (to some extents) every equilibria
% a good equilibria. This is due to the fact that
each agent has to partake in a cooperation that keeps the system
alive. Namely, every single player has to make sure that none of the
other players is running out of resource. This approach can be seen as
an intermediate setting between non-cooperative and cooperative
games. Actually, this can also be seen as a new definition of
multi-objective games in the setting of multi-player games; Every player
has a personal goal with no incentive to cooperate and second goal
where it is best for her to cooperate.

\paragraph{Engineering multiagent systems}
Some plays of a game may appear better than others by some supervising
authority. Some strategic equilibria in a game may be undesirable,
while play which are not equilibria might be seen as desirable. A
supervising authority could have the power to redistribute the
resources available in the system so as to achieve better equilibria
from their point of view. Dealing with resources such as energy, it
then becomes interesting to study how much different the game would
be, were the endowments of the players be different. As
in~\cite{HarrensteinTW15}, it is very natural to consider resource
redistributions that allow one to eliminate `bad' equilibria and/or
construct `good' equilibria.

Apart from~\cite{Wooldridge2013418} and~\cite{HarrensteinTW15},
looking into ways of engineering a game's outcome has also been
considered in \cite{DBLP:conf/concur/AlmagorAK15}. The authors propose
a framework where the winning conditions can be modified at a cost,
thus changing the strategic equilibria of the game.

\paragraph{Contributions} Our main result is the $\pspace$ membership
for rational verification i.e., given a strategy profile decide
whether it is a Nash equilibrium (Sec.~\ref{sec:ne}). Note that the
computational complexity in the electric case matches the one in the
non-electric case.  Our proof differs from the one
in~\cite{GutierrezHW15} for the non-electric case. Indeed, a
straightforward adaptation of their proof would fail for it relies on
a translation of the input into a well chosen LTL formula.
% The $\pspace$ membership follows thanks to algorithmic properties of
% LTL.
In the electric case, one has to pay particular attention to the
electric constraints (c.f., Ex.~\ref{ex:dev}).
This is a quantitative ingredient that is absent from LTL.
%% which of course, are not expressible in LTL. 
We overcome this difficulty as follows.
We construct a one-player game played on a weighted graph. This allows
us to encode the behaviour of the possible deviator together with the
electric constraints in an existing formalism, viz., \emph{Energy
  Büchi games}~\cite{ChatterjeeRR12}.  We prove that a rational
deviation exists iff this one-player game contains a winning strategy.
The size of the constructed one-player game may be exponential in the
size of the input. However, on-the-fly automata-theoretic techniques
allow one to maintain a $\pspace$ upper-bound for the problem of
finding a winning strategy. Finally, to decide in $\pspace$ whether a
strategy profile is a Nash equilibrium, it suffices to guess a
deviator and check whether she has a winning strategy in her
one-player game.
%%% \niconote{... et c'est peut etre trop technique pour IJCAI. Mais
%%% c'est notre contribution.}

% Rational verification for electric boolean games was first tackled
% in []. The result obtained then is $\conp$-membership for the
% one-shot case. The general setting of repeated games was left as an
% open problem.

Solving rational verification facilitates the access to more problems.
% of quantitative game theory in repeated interactions.
We show (Sec.~\ref{sec:engineering}) that the problems of resource
redistribution come out as corollaries. We leave open the more
challenging problem of rational synthesis for which rational
verification is a stepping stone; Rational verification is to model
checking what rational synthesis is to model synthesis.

A full version is  available in~\cite{DBLP:OT16}.

%%% Local Variables:
%%% mode: latex
%%% TeX-master: "m"
%%% End:

\section{Iterated Electric Boolean Games}
\label{sec:ebg}

% Electric boolean games are defined in~\cite{HarrensteinTW15}.
\begin{definition}[Electric Boolean Games]
  An \emph{electric boolean game} (EBG for short) is a tuple
  $\BB = (N, A, \Phi, c, e )$ where:
  $N = \{1,\cdots,n\}$ is a finite set of players.
  $A=\cup_{i=1}^n A_i$ with $A_i$ are the
  atoms controlled by player $i$ and $(A_1,\cdots,A_n)$ forms a partition of $A$.
  $\Phi = \set{\phi_1,\cdots,\phi_n}$ where $\phi_i$ is the objective of player $i$.
  $c : A \times \{\bot,\top\} \to \ZZ$ is a cost function.
  $e: N \to \NN$ is an endowment function.
\end{definition}

We denote $\sfT$ the set $\set{\bot,\top}$ and for any set $E$,
$\sfT^{E}$ the set of mappings from $E$ to $\sfT$,
the set of all the finite sequences over $E$ is $E^*$ , and
$E^\omega$ is the set of all the infinite sequences over $E$.

Let $X$ be a set of atomic propositions, a valuation of $X$ is a
total function $v\in \sfT^{X}$.
The cost of a valuation $v$
is given by $\cost(v) = \sum_{p \in X} c(p, v(p))$. An \emph{action} of player $i$
is to assign a valuation to each variable in the set $A_i$ of the
atoms she controls. 
% The action $v_i$ of player $i$ is said feasible if
% its cost is lesser than the endowment of player $i$; Formally,
% $\cost(v_i) \leq e(i)$.

We consider the setting of concurrent and infinitely repeated electric
boolean games, where players choose their actions simultaneously and
for an infinite duration. 
%% Our models are exactly the electric boolean
%% games.\footnote{In~\cite{HarrensteinTW15}, the authors require for the
%%   iterated setting, that every agent has an action of cost at most
%%   $0$. As a consequence, players always have a feasible action at any
%%   time of an iterated game, and there is always a `feasible
%%   continuation' of a history. We do not impose such a restriction
%%   here.} 
We consider objectives in $\Phi$ which are specified by LTL formulas
over the atoms of $A$ (\cite[Chap.~5]{Baier:2008:PMC:1373322}).
Formulas of LTL are defined by the following grammar: $\phi ::= p \mid
\phi \land \phi \mid \lnot \phi \mid \mathsf{X}\phi \mid \phi
\mathsf{U}\phi$ where $p \in A$. The other propositional operands and
temporal operators ($\sfF$, $\sfG$) can be defined as usual.

We need to introduce some useful terminology to talk about repeated
games and define the semantics of LTL formulas over
$\left(\sfT^A\right)^\omega$.

A \emph{history} in a repeated electric boolean game is a word in
$\left(\sfT^{A}\right)^*$. That is, a finite sequence of
valuations for the set $A$ of boolean variables.  A \emph{play}
is an infinite sequence in $\left(\sfT^{A}\right)^\omega$.
Given a play $\rho$, we note $\rho[t]$ the $t$-th valuation
function in $\rho$. We note $\rho[t\ldots]$ the suffix of $\rho$
starting at $\rho[t]$, and $\rho[\ldots t]$ the prefix of $\rho$
ending at $\rho[t]$ which is a history of size $t+1$.
% the history, prefix of $\rho$
% ending at $\rho[t]$.

LTL objectives are evaluated over a play $\rho$ of the game. For $p
\in A$, and for $\phi$ and $\psi$ two LTL formulas:
\begin{align*}
  \rho &\models p \text{ iff } \rho[0](p) = \top&
  \rho &\models \lnot \phi \text{ iff } \rho \not\models \phi&\\
  \rho &\models \mathsf{X} \phi \text{ iff } \rho[1\ldots] \models \phi&
  \rho &\models \phi\land\psi  \text{ iff } \rho \models \phi \text{
    and } \rho \models \psi\\
  \rho &\models \phi \mathsf{U} \psi \text{ iff } \exists i \geq 0,~
         \rho[i\ldots] \models \psi \text{ and } \forall 0 \leq j < i,~
         \rho[j\ldots] \models \phi&  
\end{align*}

The formula $\mathsf{X}\phi$ holds true on $\rho$ if $\phi$ is
true next. The formula $\phi \mathsf{U}\psi$ holds true on $\rho$ if
$\phi$ is true at least until $\psi$ is true.

In order to play, the players choose their actions according to a
strategy. A \emph{strategy} for player $i$ is a mapping that takes as
input a history and outputs a valuation for each atom controlled by
player~$i$. Formally a strategy $\sigma_i$ for player $i$ is a mapping
$\sigma_i : \left(\sfT^{A}\right)^* \to
\sfT^{A_i}$.
We note $\Sigma_i$ the set of strategies of player $i$.

A \emph{strategy profile} ${\sigma}$ is a vector
$(\sigma_1, \cdots, \sigma_n)$ specifying one strategy $\sigma_i$ for
each player $i \in N$. Given a strategy profile
${\sigma} = (\sigma_1, \cdots, \sigma_n)$ and a strategy $\tau_i$ for
player $i$, we note $(\tau_i, \sigma_{-i})$ the strategy profile
$(\sigma_1, \cdots, \tau_i, \cdots, \sigma_n)$.
Each strategy profile induces a
play, and since we consider pure strategies, there is
one and only one such play consistent with $\sigma$. We denote
$\play{{\sigma}}$ the play induced by the profile ${\sigma}$. 
It is defined inductively as follows: if $p \in A_i$ then
$\play{{\sigma}}[0](p) = \sigma_i(\epsilon)(p)$, and for $t\ge 0$,
$\play{{\sigma}}[t+1](p) = \sigma_i(\play{{\sigma}}[\ldots t])(p)$.

The endowment $e(i)$ of each player $i$
specified in the definition of an electric boolean game, represents the
initial resources of the player. While playing the game following a
strategy, this endowment grows as the player takes an action of
negative cost and shrinks as the player takes an action of positive
cost.

We will say that the strategy profile $\sigma$ is feasible in an
iterated EBG if it does not over-consume the endowed resources, in the
sense that, every player's strategy $\sigma_i$ 
can be infinitely
% is followed as long as the player's compound endowment is above 0
executed without ever causing the player's compound endowment to go
under $0$.
We make it more formal.

Consider an EBG $(N, A, \Phi, c, e )$ and a strategy profile
$\sigma$. The \emph{compound endowment} of player $i$ at the $t$-th step of
the play $\play{{\sigma}}$ is defined with $\E^\sigma_i(0) = e(i)$,
and
\[\E^\sigma_i(t+1) = \E^\sigma_i(t) - %\sum_{0 < k \leq t}
\cost(\sigma_i(\play{{\sigma}}[\ldots t]))\]
Thus, the strategy profile ${\sigma}$ is \emph{feasible} iff for
each player $i \in N$, 
% and for all $t < \omega$ we have
and for all $t \ge 0$ we have
  %% \[ E_i(t) \geq \cost(\sigma_i(\play{{\sigma}}[0], \ldots
  %% ,\play{{\sigma}}[t]))\]
$ \E^\sigma_i(t) \geq 0$.
In the strategy profile $\sigma$, we say that $\tau_i$ is a
\emph{feasible deviation} for player~$i$ iff $(\tau_i, \sigma_{-i})$
is a feasible strategy profile.

Once an objective $\phi_i$ and a strategy profile ${\sigma}$ are
fixed, the payoff of ${\sigma}$ for player $i$ is defined as
follows:
\[
\out_i({\sigma}) =
\begin{cases}
  1 & \text{ if } \sigma \text{ is feasible, and } \play{\sigma} \models \phi_i \enspace,\\
  0 & \text{ otherwise.}
\end{cases}
\]
In the strategy profile $\sigma$, we say that $\tau_i$ is a
\emph{rational deviation} for player~$i$ iff
$\out_i((\tau_i,\sigma_{-i})) > \out_i(\profile)$.

% \begin{proposition}
%   Memoryless profiles are not sufficient.
% \end{proposition}
\begin{example}\label{ex:spoiled-kids-form}
We formalise the game of Example~\ref{ex:spoiled-kids} and model a
strategy for the three participants.  Let $\ebg$ be an EBG $(N, A,
\Phi, c, e)$ where $N = \{I,J,M\}$, $A_I = \{r_I\}$, $A_I = \{r_J\}$,
$A_M = \{g_I, g_J\}$. Evaluated to $\top$, the atoms $r_I$, $r_J$,
$g_I$, $g_J$, respectively represent the facts that Isabella asks for
a comic book, Jules asks for a jigsaw puzzle, Mom buys a comic book,
and Mom buys a jigsaw puzzle. The costs are given by $c(r_I,\top) =
c(r_I,\bot) = c(r_J,\top) = c(r_J,\bot) = 0$, and $c(g_I, \bot) =
c(g_J, \bot) = -1$, $c(g_I, \top) = 4$, and $c(g_J, \top) = 6$. We
suppose that $e(I) = e(J) = e(M) = 0$.  The objectives are given as
$\Phi_M = \sfG((r_I \to \sfF(g_I)) \land (r_J \to \mathsf{F}(g_J)))$,
$\Phi_I = \sfG\sfF(g_I)$, and $\Phi_J = \sfG\sfF(g_J)$. The strategies
of the kids continuously asking a new item and of the Mom buying one
comic book and one jigsaw puzzle every $6$ days result in a strategy
profile whose payoff is $1$ for everyone.
\end{example}

%% \begin{example}
%%   Consider the following EBG where
%%   $N = \set{1}$,
%%   $A = \set{p}$,
%%   $\Phi = \set{\sfG\sfF p}$,
%%   $c(p,\top) = 3, c(p,\bot) = -1$,
%%   $e(1) = 1$.
%%   The profile that plays $\neg p$ for $3$ consecutive moves, plays
%%   $p$ then starts from scratch has payoff 1. This strategy is depicted
%%   in Figure~\ref{fig:straMach}.
%% \end{example}

\begin{wrapfigure}{r}{0.35\textwidth} 
\vspace{-1cm}
  \begin{center}
  \begin{subfigure}[b]{0.2\textwidth}
    \begin{tikzpicture}[>=stealth', scale = .75]
      \node (0) at (0,0) [draw, state, initial, initial text=] {$0$};
      \node [above] at (0.north) {$ r_I$};
      \draw[->] (0)to [out = -30, in = 30, loop] node [] {} (0);
    \end{tikzpicture}
    \caption{Isabella's strategy.}
  \end{subfigure}
  \begin{subfigure}[c]{0.2\textwidth}
    \begin{tikzpicture}[>=stealth', scale = .75]
      \node (0) at (0,0) [draw, state, initial, initial text=] {$0$};
      \node [above] at (0.north) {$r_J$};
      \draw[->] (0)to [out = -30, in = 30, loop] node [] {} (0);
    \end{tikzpicture}
    \caption{Jules' strategy.}
  \end{subfigure}
\end{center}
\hfill
\begin{center}
  \begin{subfigure}[c]{0.4\textwidth}
    \begin{tikzpicture}[>=stealth', scale = .85]
      \node (0) at (0,0) [draw, state, initial, initial text=] {$0$};
      \node [above] at (0.north) {$(\lnot g_I,\lnot g_J)$}; \node (1) at
      (2,0) [draw, state] {$1$}; \node [above]at (1.north)
      {$(\lnot g_I,\lnot g_J)$}; \node (2) at (4,0) [draw, state] {$2$};
      \node [above] at (2.north) {$(\lnot g_I,\lnot g_J)$}; \node (3) at
      (4,-2) [draw,state] {$3$}; \node [below] at (3.south)
      {$(\lnot g_I,\lnot g_J)$}; \node (4) at (2,-2) [draw,state] {$4$};
      \node [below] at (4.south) {$(\lnot g_I,\lnot g_J)$}; \node (5) at
      (0,-2) [draw,state] {$5$}; \node [below] at (5.south)
      {$(g_I,g_J)$};
      \draw[->] (0) to [] node [] {} (1); \draw[->] (1) to [] node []
      {}(2); \draw[->] (2) to [] node [] {} (3); \draw[->] (3) to []
      node [] {} (4); \draw[->] (4) to [] node [] {} (5); \draw[->]
      (5) to [] node [] {} (0);
    \end{tikzpicture}
    \caption{Mom's strategy.}
    \label{fig:strat-mom}
    \vspace{-.5cm}
  \end{subfigure}%
\end{center}
\caption{A finite memory profile seen as finite
  graphs.\label{fig:straMach}}
\vspace{-1cm}
\end{wrapfigure}
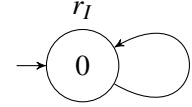
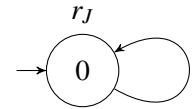
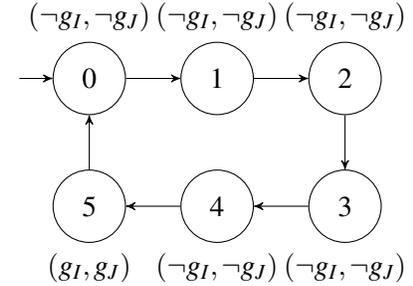

The strategies suggested at the end of
Example~\ref{ex:spoiled-kids-form} are depicted in
Figure~\ref{fig:straMach}. They are instances of what we call
\emph{finite memory} strategies.
%% because each player needs only remember a finite
%% number decision states. 
We formalise the class of finite memory strategies next.

\begin{definition}[Finite memory strategy]
  \label{def:finite-memory-strategy}
  Let $i\in N$ be a player,
  a finite memory strategy $\sigma_i$ for player $i$ consists of a
  finite set $M$ called the memory, an initial memory state $m^{in}$
  in $M$, a mapping $\up{\sigma_i} : M\times \sfT^{A} \to M$ called the
  update function, and a mapping $\ch{\sigma_i} : M \to \sfT^{A}$ called
  the choice function.
\end{definition}

We say that $(\sigma_1,\cdots,\sigma_n)$ is a \emph{finite memory} profile if for every
$i\in N$, $\sigma_i$ is a  finite memory strategy.
For instance, in the strategy of Figure~\ref{fig:strat-mom},
the set $M$ is $\set{0,1,2,3,4,5}$, the initial memory state is $0$, the
update function is the edge relation and the choice function is
illustrated by labels next to vertices\footnote{We omit the labels on
  the edges to highlight that  for each player the update function
  depends only on the current memory state.}.

% \mynote{Youssouf : here}
%%% Local Variables:
%%% mode: latex
%%% TeX-master: "m"
%%% End:

\section{Nash Equilibria in Electric Boolean Games}
\label{sec:ne}

In~\cite{HarrensteinTW15}, the authors introduced iterated electric boolean
games but did not study their strategic aspects. Hence
no solution concept was defined. However, the concept of Nash
equilibria is one of most natural concept in multiplayer games.

% for the iterated setting
% in~\cite{HarrensteinTW15}. Nonetheless, the authors already 
% studied Nash equilibrium for the one-shot setting over \emph{feasible} strategy
% profiles. Following this lead, it is natural to study Nash equilibria
% in the iterated setting of EBGs over 
% feasible strategy profiles. 

\begin{definition}[Nash equilibrium]\label{def:ne}
  Let $\ebg$ be an EBG and $\profile$ be a strategy profile. We say
  that $\profile$ is a \emph{Nash equilibrium} iff the following holds:
  \begin{enumerate}
  \item $\forall t\ge0,~\forall i\in N,~\E_i^\sigma(t) \ge 0$,
  \item
    $ \forall i\in N,~\forall \tau_i\in\Sigma_i~,
    \out_i((\tau_i,\sigma_{-i})) \le \out_i(\profile) $.
  \end{enumerate}
\end{definition}

Using our terminology, $\profile$ is a Nash equilibrium in $\ebg$ if
and only if it is feasible and there is no rational deviation for any
player. We note $\NE(\ebg)$ the set of Nash equilibria in the game
$\ebg$. For instance, the strategy profile depicted in
Figure~\ref{fig:straMach} is a Nash equilibrium in the game of
Examples~\ref{ex:spoiled-kids} and~\ref{ex:spoiled-kids-form}

\begin{definition}[Nash Equilibrium Membership]
  Let $\ebg$ be an electric boolean game, and $\profile$ be a finite
  memory strategy profile. The Nash Equilibrium Membership (NEM)
  problem asks whether $\profile \in \NE(\ebg)$.
\end{definition}

In order to build intuition regarding deviations, consider the
following example

\begin{wrapfigure}{r}{0.4\textwidth} 
\vspace{-1cm}
\begin{center}
\input{./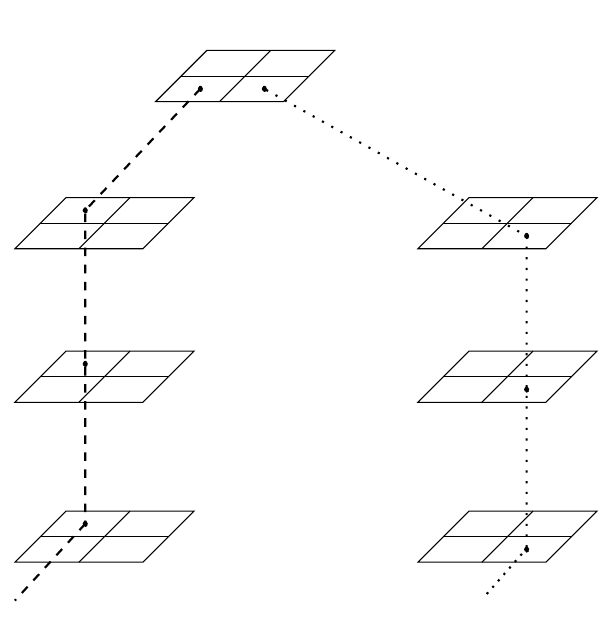_t}
\end{center}
\vspace{-.5cm}
\caption{plays induced by the profiles $(\sigma_1,\sigma_2)$ and
  $(\sigma_1,\tau)$.} 
\label{fig:plays}
\vspace{-1.2cm}
\end{wrapfigure}

% \begin{figure}[h]
% \begin{center}
% \input{./FIGS/example.pdf_t}
% \end{center}
% \caption{plays induced by the profiles $(\sigma_1,\sigma_2)$ and
%   $(\sigma_1,\tau)$.} 
% \label{fig:plays}
% \end{figure}

\begin{example}
  \label{ex:dev}
  Let $\ebg$ be the following two-player game, 
  \begin{align*}
    &A_1 = \set{p},~ A_2=\set{q}\enspace,\\
    &\phi_1 \equiv \sfG\left((q\to \sfX p) \land (\lnot q \to
      \sfX\lnot p)\right),~\phi_2 \equiv \sfG q\enspace,\\
    &c(p,\top) = 1 ,~c(p,\bot) = -1,~c(q,\top) = c(q,\bot) =
      0,~e(1) = e(2) = 0\enspace.
  \end{align*}

Consider the following strategy
$\sigma_1$ for player 1 that assigns
  % \begin{align*}
  %   \sigma_1(h)(p) =
  %   \begin{cases}
  %     \top \text{ if } h\in
  %     \left(\sfT^{\set{p,q}}\right)^*\left(\sfT^{\set{p}}
  %       \cup\set{\top}^{\set{q}}\right)\enspace,\\
  %     \bot \text{ otherwise.}
  %   \end{cases}
  % \end{align*}
  % It assigns 
  $\top$ to $p$ iff $\top$ was assigned to $q$ the previous
  round. We also consider the strategy $\sigma_2$ for player 2 that
  always assigns $\bot$ to $q$.

  We argue that the profile $(\sigma_1,\sigma_2)$ is a Nash
  equilibrium. Clearly $(\sigma_1,\sigma_2)$ is feasible. Let us show
  that player 2 does not have a rational deviation. In order to
  increase her payoff, player 2 has to always assign $\top$ to $q$,
  call this new strategy $\tau$. However, the deviation $\tau$ is not
  feasible.  Indeed, player 1 is still following $\sigma_1$, we obtain
  \begin{align*}
    &\sigma_1
    \left(\epsilon
    \right)(p) = \bot    
     \text{ with }\E_1^{(\sigma_1,\tau)}(1) = 1 \enspace,\\
    &\sigma_1
    \left(
      \set{(p,\bot), (q,\top)}
      \right)(p) = \top
    \text{ with } \E_1^{(\sigma_1,\tau)}(2) = 0 \enspace,\\
    &\sigma_1\left(
      \set{(p,\bot), (q,\top)}\set{(p,\top),(q,\top)}
    \right)(p) = \top        
    \text{ with } \E_1^{(\sigma_1,\tau)}(3) = -1 \enspace,
  \end{align*}
  showing that the compound endowment drops below 0 after the third
  round. The plays induced by the two profiles are depicted in
  Figure~\ref{fig:plays}.
\end{example}

This example shows that in order to perform a rational deviation, a
player has to check the endowment of all the players and not only her
own. We are now ready to state the main theorem of this paper.

\begin{restatable}{theorem}{thmMain}
\label{th:NEM-PSPACEc}
  NEM is a $\pspace$-complete problem. It is $\pspace$-hard even
  when there is only one player.
\end{restatable}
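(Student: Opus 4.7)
The plan is to establish $\pspace$-membership and $\pspace$-hardness separately. For hardness, already with a single player, I reduce from (the complement of) LTL satisfiability, which is $\pspace$-hard. Given an LTL formula $\phi$ over atoms $A$, build a one-player EBG $\ebg$ with $A_1 = A \cup \{p\}$ for a fresh atom $p$, objective $\phi_1 = p \land \phi$, all costs $0$, and endowment $0$. Let $\sigma_1$ be any finite-memory strategy that plays $p = \bot$ at time~$0$; then $\play{\sigma_1} \not\models \phi_1$, and since costs are zero every strategy is feasible. A rational deviation therefore exists iff some play satisfies $\phi_1$, equivalently iff $\phi$ is satisfiable: fix $p = \top$ at time~$0$ and plug in any model of $\phi$. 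Hence $\sigma_1 \in \NE(\ebg)$ iff $\phi$ is unsatisfiable, giving $\pspace$-hardness by complementation even for $n = 1$.

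For membership, the algorithm has two phases. Phase~1 checks that $\sigma$ is feasible: since $\sigma$ is finite memory, both $\play{\sigma}$ and the sequence of compound endowments are eventually periodic through the product memory graph, so a traversal using polynomial-size counters decides whether any coordinate ever dips below~$0$ and whether the mean cycle cost of any player is positive. Phase~2 checks, for every player $i$ with $\out_i(\sigma) = 0$ (the only candidate rational deviators), that no $\tau_i$ makes $(\tau_i, \sigma_{-i})$ feasible while satisfying $\phi_i$. The crux is to encode this second question as a one-player weighted game $G_i$.

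The vertices of $G_i$ are pairs $(m_{-i}, q)$, where $m_{-i} = (m_j)_{j \ne i}$ collects one memory state per non-deviator strategy and $q$ is a state of a nondeterministic Büchi automaton $\AA_{\phi_i}$ of size $2^{O(|\phi_i|)}$ for $\phi_i$. From $(m_{-i}, q)$, player~$i$ picks $a_i \in \sfT^{A_i}$; each $\sigma_j$ deterministically contributes $a_j = \ch{\sigma_j}(m_j)$, defining a global valuation $v$; the successor updates each $m_j$ via $\up{\sigma_j}$ and moves $q$ along some transition of $\AA_{\phi_i}$ reading $v$. Each edge carries the $n$-dimensional weight whose $j$-th component is $\cost(a_j)$. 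With initial credit $(e(1), \ldots, e(n))$, player~$i$ wins a play of $G_i$ iff the Büchi condition of $\AA_{\phi_i}$ is satisfied and every coordinate of the running endowment stays non-negative; by construction this is exactly the condition for a rational feasible deviation to exist.

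Although $G_i$ has exponentially many vertices, it is solved on the fly in $\pspace$: non-deterministically guess $\tau_i$'s successive moves and the NBA's transitions while storing only the current vertex and the current energy vector. Each coordinate of the latter can be safely capped at $|V| \cdot W$, where $W$ is the maximum absolute cost and $|V|$ the vertex count of $G_i$; this threshold has polynomial bit representation and any credit above it is enough to survive further dips, so the whole configuration stays polynomial in size. Büchi acceptance is then certified by the standard non-deterministic lasso-guessing trick within the same budget. Iterating over the candidate deviators gives an $\npspace$ procedure for ``$\sigma \notin \NE$'', so by Savitch's theorem NEM lies in $\pspace$. The main obstacle, and why the LTL-only argument of~\cite{GutierrezHW15} does not transfer directly, is that the feasibility of $\tau_i$ genuinely depends on the endowments of all $n$ players at once, so the quantitative constraint cannot be absorbed into the LTL objective and must be tracked as multi-dimensional energy; working with $G_i$ rather than with a pure LTL formula is precisely what isolates and copes with this ingredient.
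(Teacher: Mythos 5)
Your proposal follows essentially the same route as the paper: a feasibility check by traversing the product of the memory structures, and, for each player with payoff $0$, a one-player energy-B\"uchi game built from the co-players' memories and a B\"uchi automaton for $\phi_i$, solved on the fly in $\pspace$ with the nonnegativity of \emph{all} $n$ endowment coordinates tracked as multi-dimensional energy (these are exactly the paper's Constructions~\ref{cons:feas} and~\ref{cons:dev} and Lemmas~\ref{prop:feasCor} and~\ref{lm:DevStrat}). The only cosmetic differences are that you spell out the one-player hardness reduction from LTL (un)satisfiability where the paper cites \cite{GutierrezHW15}, and that the justification of the energy cap for the multi-dimensional on-the-fly search is sketched at about the same level of detail as the paper's Proposition~\ref{prop:onthefly}.
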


To prove the theorem, we exhibit two
constructions, c.f.~Construction~\ref{cons:feas}, and
Construction~\ref{cons:dev}. The former allows one to check the feasibility of
a profile, while the latter allows one to check the existence of a
rational deviation.

In Section~\ref{sec:check-feas-pspace}, and
Section~\ref{sec:check-dev-pspace}  we 
let $\ebg$ be an EBG, and
$\sigma$ be a finite memory 
profile.  Let also $(M_i, m_{i}^{in},\up{\sigma}_i,\ch{\sigma}_i)$ be the finite
memory strategy of player $i$ in the profile $\sigma$.

\subsection{Checking feasibility in PSPACE}
\label{sec:check-feas-pspace}

We say that $G$ is a \emph{$d$-weighted graph} if $G$ is associated with 
a weight function $w:E\to\ZZ^d$. For a vertex $u$ and a vector $w_0$ in
$\NN^d$, 
a subset $C$ of $V$ is a nonnegative
reachable cycle from $u$
if the following holds. 
(i)~There exists $v$ in $C = \set{u_j \mid l \le j \le k}$, and a
path $u_0,\cdots,u_l,\cdots u_{k}$ such that $u_0 = u$, $u_l = v$,
and $u_{k} = v$. 
(ii)~For all $0\le t\le k-1$ we have $w_0 - \sum_{j=0}^t w(u_j,u_{j+1})
\ge\set{0}^d$, and $\sum_{j=l}^{k-1} w(u_j,u_{j+1})\le\set{0}^d$.
Positive cycles are defined as expected.

In order to prove Proposition~\ref{prop:feasCheck} we use the results
of~\cite{KosarajuS88}.  In particular, given a $d$-weighted graph $G$,
we can detect a \emph{nonnegative reachable cycle} in polynomial time
in the size of $G$.\footnote{The result of \cite{KosarajuS88} is to
  find $0$-cycles. To find nonnegative cycles, it suffices to
  transform a weighted graph $G$ into $G'$ by adding a reflexive edge
  of weight $-1$ to every vertice. This is a polynomial
  transformation. $G$ has a nonnegative cycle iff $G'$ has a
  zero-cycle.}

% \begin{definition}[nonnegative reachable cycle]~
%   Let $G = (V,E,w:E\to\ZZ^d)$ be a $d$-weighted graph, $u$ be a
%   vertex, $w_0$ in $\NN^d$ be an initial vector credit. We say that $C$ a
%   subset of $V$ is a nonnegative reachable cycle from $u$ if:
%   \begin{itemize}
%   \item There exists $v$ in $C = \set{u_j \mid l \le j \le k}$, and a
%     path $u_0,\cdots,u_l,\cdots u_{k}$ such that $u_0 = u$, $u_l = v$,
%     and $u_{k} = v$.
%   \item For all $0\le t\le k-1$ we have
%     $w_0 - \sum_{j=0}^t w(u_j,u_{j+1}) \ge\set{0}^d$.
%   \item $\sum_{j=l}^{k-1} w(u_j,u_{j+1})\le\set{0}^d$.
%   \end{itemize}
% \end{definition}

Our approach consists in constructing 
a $n$-weighted graph $G[\sigma]$ from the
finite memory profile $\sigma$. This is achieved by
Construction~\ref{cons:feas}. We show that $G[\sigma]$ contains such a
cycle iff $\sigma$ is feasible.

We start first by giving the details of how $G[\sigma]$ is obtained.

\begin{cons}
\label{cons:feas}
$G[\sigma]$ consists of a finite set of vertices $V$, an edge relation
$E \subseteq V\times V$, and weight function $w : E \to \ZZ^{n}$.
$G[\sigma]$ is obtained as follows:
\begin{itemize}
\item The vertices are $V = \prod_{i\in N}M_i$.
\item For $v\in V$ we denote $v_i$ the $i$-th component of $v$. Let
  $(u,v) \in V\times V$ be a couple of vertices, $(u,v)$ is an edge in
  $E$ if for each $i\in N$ we have $\up{\sigma}_i(u_i, X) = v_i$ where
  $X = \bigcup_{j \in N}\ch{\sigma}_j(u_j)$ is the complete valuation
  over $A$ prescribed by the profile $\sigma$.
\item Finally, for $(u,v) \in E$,
  \[
  w(u,v) = \left(
    \cost(\ch{\sigma}_1(u_1)),\cdots,\cost(\ch{\sigma}_{n}(u_{n})) \right)
  \enspace.
  \]
\end{itemize}
\end{cons}

% Let us show that Construction~\ref{cons:feas} is correct.

The following lemma states the key property of
Construction~\ref{cons:feas}.

\begin{restatable}{lemma}{lmConsOne}
  \label{prop:feasCor}
  The finite memory strategy profile $\profile$ is feasible iff
  $G[\sigma]$ has a nonnegative reachable cycle from 
  $u^0 = (m^{in}_1, \ldots, m^{in}_n)$ with initial credit $e$.
\end{restatable}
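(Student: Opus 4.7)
The plan is to exploit the determinism of $G[\sigma]$ to identify the lasso-shaped path induced by $\sigma$ and relate the feasibility condition on compound endowments step by step with the two conditions in the definition of a nonnegative reachable cycle.

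First, I would observe that the graph $G[\sigma]$ is deterministic: at every vertex $u = (u_1,\ldots,u_n)$, the full valuation $X = \bigcup_{j} \ch{\sigma}_j(u_j)$ is determined by $u$, and so is the successor $v$ with $v_i = \up{\sigma}_i(u_i,X)$. Hence there is a unique infinite path $u^0, u^1, u^2,\ldots$ starting from $u^0 = (m^{in}_1,\ldots,m^{in}_n)$, and because $V$ is finite this path is eventually periodic. A straightforward induction on $t$ then shows that $u^t$ is exactly the tuple of memory states reached by the players at step $t$ under $\sigma$, that the action prescribed at step $t$ equals $\bigcup_j \ch{\sigma}_j(u^t_j)$, and therefore that
\[
\E^\sigma_i(t) \;=\; e(i) - \sum_{s=0}^{t-1} w_i(u^s,u^{s+1}).
\]

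For the forward direction, assume $\sigma$ is feasible. Let $l$ be the least index such that $u^l$ reappears later on the unique path, and let $k>l$ be the least index with $u^k=u^l$; set $v=u^l$, so that $u^0,\ldots,u^l,\ldots,u^k$ is a witness path with $C=\{u^l,\ldots,u^{k-1}\}$. Condition (i) of the definition is exactly the feasibility inequality $\E^\sigma_i(t+1) \geq 0$ for $0\leq t\leq k-1$. For condition (ii), suppose for contradiction that some component of $\sum_{j=l}^{k-1} w(u^j,u^{j+1})$ were strictly positive; since by determinism the infinite path cycles forever on $u^l,\ldots,u^{k-1}$, that component of $\sum_{s=0}^{t-1} w(u^s,u^{s+1})$ grows without bound, eventually exceeding $e(i)$ and forcing $\E^\sigma_i(t) < 0$, contradicting feasibility.

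For the backward direction, assume $G[\sigma]$ has a nonnegative reachable cycle from $u^0$ with initial credit $e$, witnessed by $u_0,\ldots,u_l,\ldots,u_k$. Again by determinism this prefix coincides with the prefix $u^0,\ldots,u^k$ of the unique infinite path, and the play then loops through $u^l,\ldots,u^{k-1}$ forever. For any $t \leq k$, condition (i) gives $\E^\sigma_i(t) \geq 0$. For $t > k$, write $t = k + q\cdot(k-l) + r$ with $0\leq r < k-l$; using the identification above and condition (ii),
\[
\E^\sigma_i(t) \;=\; \E^\sigma_i(l+r) \;-\; (q+1)\sum_{j=l}^{k-1} w_i(u^j,u^{j+1}) \;\geq\; \E^\sigma_i(l+r) \;\geq\; 0,
\]
so $\sigma$ is feasible.

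The main subtlety, rather than an obstacle, is the passage from the finite lasso certificate to infinite feasibility; it is handled by the observation that a nonpositive cycle sum can only improve the endowment upon each additional loop iteration, so checking one traversal is enough. The bookkeeping identifying $u^t$ with the actual memory state at step $t$ under $\sigma$ is routine but essential, since both conditions in the definition refer to the path, while feasibility is a statement about the play.
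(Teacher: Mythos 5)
Your proof is correct and follows the argument the construction is designed for: $G[\sigma]$ is deterministic, its unique path from $u^0$ tracks the memory states of $\sigma$ so that $\E^\sigma_i(t)=e(i)-\sum_{s<t}w_i(u^s,u^{s+1})$, feasibility gives the lasso with nonnegative partial sums and (by the unboundedness argument) a nonpositive loop sum, and conversely the nonpositive loop sum propagates nonnegativity of the endowments to all later times. The only blemish is cosmetic: what you call ``condition (i)'' (nonnegativity of the partial sums) and ``condition (ii)'' (nonpositivity of the cycle sum) are both packaged in item (ii) of the paper's definition, item (i) being merely the existence of the reachable cycle, which you establish via determinism and finiteness of $V$.
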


% We are now ready to give the proof of
% Proposition~\ref{prop:feasCheck}.
A consequence of the above lemma is

\begin{restatable}{proposition}{PropConsOne}
  \label{prop:feasCheck}
  We can check in $\pspace$ whether $\sigma$ is feasible.
\end{restatable}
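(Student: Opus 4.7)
The plan is to leverage Lemma~\ref{prop:feasCor} together with the observation that, since $\sigma$ is a profile of pure finite-memory strategies, $G[\sigma]$ is deterministic: every vertex has exactly one outgoing edge. Consequently the walk issued from $u^0 = (m^{in}_1,\dots,m^{in}_n)$ is unique and, writing $N := \prod_i |M_i|$ for the total number of states, it has the shape of a lasso of total length at most $N$. By Lemma~\ref{prop:feasCor}, $\sigma$ is feasible iff along this walk (i) the credit vector stays componentwise nonnegative throughout the stem and the first traversal of the cycle, and (ii) the total weight of the cycle is componentwise nonpositive; nonnegativity on subsequent cycle iterations is then automatic, since the credit at any position of the $k$-th iteration equals the credit at the corresponding position of the first iteration minus $(k-1)$ times the cycle sum, hence can only increase or stay equal.

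Although $|V|$ is exponential in $|\ebg|+|\sigma|$, each vertex is a tuple of memory states with polynomial-size description, and the successor relation and weight function of $G[\sigma]$ are computable in polynomial time directly from the update and choice components of $\sigma$. We can therefore simulate the unique walk on the fly, storing only the current vertex, the current credit vector, a recorded pair $(s^*,r^*)$, a running cycle-sum vector, and a step counter. After at most $2N$ steps every integer coordinate is bounded in magnitude by $\max_i e(i) + 2N\cdot|A|\cdot\max_{p,b}|c(p,b)|$, whose binary encoding uses only polynomially many bits; likewise the counter up to $2N$ takes $O(\log N)$ bits, and $\log N = \sum_i \log|M_i|$ is polynomial in the input size.

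The algorithm then runs as follows. Simulate exactly $N$ transitions from $u^0$, aborting if some coordinate of the credit vector becomes negative along the way. Set $s^*$ to the current vertex and initialise a fresh cycle-sum vector to $\vec 0$. Continue the simulation for at most $N$ further steps, updating the cycle-sum and checking credit nonnegativity at each step, until the current vertex coincides with $s^*$. Accept iff no nonnegativity check failed and the final cycle-sum is componentwise $\le 0$. Correctness rests on the pigeonhole argument that after $N$ steps the current state necessarily lies on the cycle of the lasso, so it is revisited within at most $N$ further transitions. The main and essentially only obstacle is the succinctness of $V$, which is resolved precisely by this on-the-fly simulation: we never materialise $G[\sigma]$ and only expand transitions as needed, keeping the whole procedure in polynomial space.
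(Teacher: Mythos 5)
Your proof is correct, but it takes a genuinely different (and more elementary) route than the paper. The paper's argument goes through Lemma~\ref{prop:feasCor} and then appeals to the nonnegative-reachable-cycle detection of~\cite{KosarajuS88}, treating $G[\sigma]$ as a general $n$-weighted graph; the delicate point there is that this algorithm is polynomial in the size of $G[\sigma]$, which is exponential in the input, so one still has to argue that it can be run with only polynomial space. You sidestep that machinery entirely by observing that Construction~\ref{cons:feas} makes $G[\sigma]$ \emph{functional} (the valuation $X$ is determined by the source vertex and each $\up{\sigma}_i$ is a function, so every vertex has a unique successor), hence the only candidate witness for Lemma~\ref{prop:feasCor} is the unique lasso from $u^0$; feasibility then reduces to nonnegativity along the stem and one cycle traversal plus a componentwise nonpositive cycle sum, all checkable by an on-the-fly simulation storing one vertex, one credit vector, one cycle-sum vector and a counter, each of polynomial bit-size. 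Your equivalence argument is sound in both directions (a positive cycle coordinate forces eventual violation; a nonpositive cycle sum makes later iterations dominate the first), and your pigeonhole choice of the anchor vertex after $N$ steps correctly identifies a vertex on the cycle. What the paper's approach buys is uniformity — it does not depend on the determinism of the profile and would survive a generalisation to nondeterministic or mixed behaviours; what yours buys is a self-contained, transparently polynomial-space deterministic procedure that needs no external cycle-detection result.
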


\subsection{Checking the existence of rational deviation in PSPACE}
\label{sec:check-dev-pspace}
Now that we can check whether a profile is feasible, we need to show
how to check the existence of rational deviation for a player.

We recall  that $\ebg$, $\sigma$, and $\sigma_i =
(M_i, m_i^{in},\up{\sigma}_i,\ch{\sigma}_i)$ are still fixed.

% In order to prove the above proposition, 
We need to introduce some technical material.
A B\" uchi automaton $\AA$ is a tuple $\AA = (Q,q_0, A, \Delta, F)$ where
the $Q$ is a finite set of states, $q_0$ is an initial state, $A$ is a
finite alphabet, $\Delta$ is relation in $Q\times A\times Q$, and $F$
is a subset of states called accepting. We say that an infinite word $w$
is recognised by $\AA$ if there exists an infinite path $\rho$ in $\AA$
labelled by $w$ such that  $\rho$ visits states in $F$ infinitely many
times. We also say that $\rho$ is a run induced by $w$ on $\AA$.
We define $\LL_\AA$ as the set of words recognised by $\AA$.
% A central theorem states that a language is $\omega$-regular if and
% only if it is recognised by a Büchi automaton.
The reason we need Büchi automata is their strong link with
LTL. Indeed, any LTL formula $\phi$, can be
associated to a Büchi automaton accepting all its models. The
following theorem formalises this idea.
\begin{theorem}
  \label{th:ltlaut}
  Let $\phi$ be a LTL formula, there exists a Büchi automaton
  $\AA_{\phi}$ accepting the language $\LL_{\phi}$ consisting of all
  the models of $\phi$.
\end{theorem}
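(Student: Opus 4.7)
The plan is to follow the classical Vardi--Wolper tableau construction. First, I would define the closure $cl(\phi)$ as the smallest set of subformulas of $\phi$ that also contains the negation of each of its elements (identifying $\lnot\lnot\psi$ with $\psi$); its size is linear in $|\phi|$. The states of $\AA_\phi$ are then the \emph{atoms} of $cl(\phi)$, i.e., subsets $s \subseteq cl(\phi)$ that are locally consistent: for each $\psi \in cl(\phi)$, exactly one of $\psi,\lnot\psi$ lies in $s$; if $\psi_1 \land \psi_2 \in cl(\phi)$, then $\psi_1 \land \psi_2 \in s$ iff both $\psi_1, \psi_2 \in s$; and if $\psi_1 \mathsf{U} \psi_2 \in cl(\phi)$, then $\psi_1 \mathsf{U} \psi_2 \in s$ iff either $\psi_2 \in s$, or both $\psi_1 \in s$ and $\mathsf{X}(\psi_1 \mathsf{U} \psi_2) \in s$.

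Next, I would define the transition relation over the alphabet $\sfT^A$. For atoms $s,s'$ and letter $a$, declare $(s,a,s') \in \Delta$ iff (i) the propositional literals in $s$ agree with $a$, and (ii) for every $\mathsf{X}\psi \in cl(\phi)$, $\mathsf{X}\psi \in s$ iff $\psi \in s'$. The initial states are those atoms containing $\phi$. For the acceptance condition I use a \emph{generalised} Büchi condition with one acceptance set per until subformula, namely $F_{\psi_1 \mathsf{U} \psi_2} = \{s : \psi_1 \mathsf{U} \psi_2 \notin s \text{ or } \psi_2 \in s\}$, and then convert to a single-set Büchi automaton via the standard counter-product trick, which inflates the state space only by a linear factor.

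Correctness would then follow from an induction on formula structure showing that, for any accepting run $\rho$ on a word $w \in (\sfT^A)^\omega$, we have $\psi \in \rho[t]$ iff $w[t\ldots] \models \psi$. For the converse direction, given $w \models \phi$ I would build the canonical run by setting $\rho[t] = \{\psi \in cl(\phi) : w[t\ldots] \models \psi\}$: this is readily checked to be an atom, to respect all transitions, and to pass through each $F_{\psi_1 \mathsf{U} \psi_2}$ infinitely often, because any pending until in $\rho[t]$ must, by the semantics of $\mathsf{U}$, be witnessed at some later position $t' \ge t$.

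The main delicate point, and the only nontrivial design choice, is the acceptance condition. The atomic consistency rules only force $\psi_1 \mathsf{U} \psi_2$ to be locally equivalent to $\psi_2 \lor (\psi_1 \land \mathsf{X}(\psi_1 \mathsf{U} \psi_2))$, so nothing syntactic prevents a run from keeping $\psi_1 \mathsf{U} \psi_2$ true at every position without ever satisfying $\psi_2$. The sets $F_{\psi_1 \mathsf{U} \psi_2}$ are precisely what rules out this infinite postponement and aligns the accepted language with the semantics. Everything else in the construction is essentially syntactic bookkeeping.
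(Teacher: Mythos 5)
Your proposal is a correct rendition of the classical Vardi--Wolper tableau construction (closure, atoms, $\mathsf{X}$-propagation, one generalised B\"uchi set per until to block infinite postponement, then degeneralisation), which is precisely the standard result the paper invokes here without proof. The only cosmetic mismatch is that the paper's automata have a single initial state $q_0$ rather than a set of initial atoms, which is harmless to fix with one extra state.
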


The other formalism is \emph{one-player games}.
Let $G=(V,E, W)$ be a graph with a set of vertices $V$,
a set of edges $E\subseteq V\times V$,
and winning objective $W \subseteq V^\omega$.  Strategies for these
games are formalised by the following mapping $V^*V \to V$.
Let $\sigma$ be a strategy for the player, and $u_0$ a vertex in $V$.
The play $\rho$ starting in $u_0$ and consistent with $\sigma$ is
obtained as follows:
% \begin{align*}
%   \begin{cases}
%     \rho[0] = u_0\enspace,\\
%     \forall i > 0,~\sigma(\rho[\ldots i])\enspace.
%   \end{cases}
% \end{align*}
$\rho[0] = u_0$, and for all $i>0$, $\sigma(\rho[\ldots i])$.
The player wins if the play $\rho$ is in $W$.
A strategy $\sigma$ is winning for the player from $u_0$
if the play consistent with
$\sigma$ is in $W$. Finite memory strategies can be
defined in a similar fashion as for EBGs.
In this paper, we use the so-called multi-objective games. Those are
games where the player has to fulfil a combination of objectives at once. 

 {\bf Büchi objectives.} We choose a set $F \subseteq V$ of accepting
 vertices. The winning objective $W$ is $(V^*F)^\omega$. We denote
 this winning objective $\mathsf{Buchi}$.

{\bf Energy objectives.} Let $d>0$ be a natural, $w_0 \in \NN^d$ be an
initial vector, and $w : E \to \ZZ^{d}$ be an energy function. The
winning objective is the set
$\set{u_0u_1\cdots \in V^\omega\mid \forall k\ge i,~w_0 - \sum_{i=0}^k
  w(u_i,u_{i+1}) \ge \set{0}^d}$.
We denote this winning objective $\mathsf{Energy}$.

The winning objective we are interested in is $\enbuchi$ defined by
$\mathsf{Buchi} \cap\mathsf{Energy}$.

Roughly speaking, given a profile $\sigma$ and a player $i$, we
construct a one-player $\enbuchi$ game $G[\sigma_{-i}]$. The purpose of this
game is to contain a winning strategy iff a rational
deviation exists. Moreover, the winning strategy in $G[\sigma_{-i}]$
will be the deviation that player $i$ uses to increase her payoff.
Let us explain how to construct the one-player game $G[\sigma_{-i}]$. 

\begin{cons}\label{cons:dev}
  We note $V$ the set of vertices in $G[\sigma_{-i}]$, $E$ the edge
  relation defined over $V\times \sfT^A \times V$, and the weight
  function $w$ is a mapping from $V\times \sfT^A \to \ZZ^n$.

  Let $\AA_{i} = (Q, \sfT^{A_{\phi_i}}, q_0, \Delta, F)$ be an
  automaton accepting the language $\LL_{\phi_i}$.

  The graph $G[\sigma_{-i}]$ is obtained as follows:
  \begin{itemize}
  \item The vertices are
    $V = Q \times \prod_{j\in N\setminus\set{i}}M_j$.
  \item Let $v$ be a vertex in $V$, for $j \in N\setminus\set{i}, v_j$
    refers to the $j$-th component of $v$ and $v_i$ is the projection
    over $Q$.  For $(u,v) \in V\times V$, and for every valuation
    $X \in \sfT^{A}$ we have $(u,X,v)$ in $E$ if
    \begin{enumerate}
    \item[$i)$] there exists $Y\in \sfT^{A_{\phi_i}}$ such that
      $(u_i, Y ,v_i) \in \Delta$ and $Y\subseteq X$,
    \item[$ii)$] the set
      $Z = Y\cup \bigcup_{j \in N\setminus\set{i}}\ch{\sigma}_j(u_j)
      \subseteq X$ and is consistent over $A_{\phi_i}$ i.e.
      \[
      \forall p \in A_{\phi_i},~ (p,\top) \in Z \implies (p,\bot)
      \not\in Z \enspace,
      \]
    \item[$iii)$] for each $j\in N\setminus\set{i}$ we have
      $\up{\sigma}_j(u_j, X) = v_j$.
    \end{enumerate}
  \item The weight function is given by $\cost(\ch{\sigma}_j(u_j))$
    for every dimension $j \in N\setminus\set{i}$ and by
    $\sum_{p\in A_i}c(p,X(p))$ for dimension $i$.
  \item Finally, a vertex $v\in V$ is accepting if $v_i \in F$.
  \end{itemize}
\end{cons}

The intuition behind this construction is as follows. If player $i$ can
deviate rationally, then necessarily the new profile satisfies
$\phi_i$. This is why we use automaton $\AA_{\phi_i}$ whose language is
exactly those words that satisfy $\phi_i$.  Also, since we consider
only unilateral deviations, the actions leading to the satisfaction of
$\phi_i$ have to be compatible with the choices of other players, that
is $\sigma_{-i}$. This is ensured by $ii)$. Item $iii)$ is a
synchronisation between the action of the other player and the
deviation of player $i$.

Thanks to the following lemma, we show that
Construction~\ref{cons:dev} meets the desired intuition.

\begin{restatable}{lemma}{lmConsTwo}
  \label{lm:DevStrat}
  Let $\sigma$ be a finite memory profile, and $i$ be a player such
  that $\out_i(\sigma) = 0$ then, $i$ has a rational deviation iff
  there exists a winning strategy in $G[\sigma_{-i}]$.
\end{restatable}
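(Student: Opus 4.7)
Assume $\tau_i$ is a rational deviation. Since $\out_i(\sigma)=0$, we have $\out_i((\tau_i,\sigma_{-i}))=1$, so $(\tau_i,\sigma_{-i})$ is feasible and $\rho := \play{(\tau_i,\sigma_{-i})}$ satisfies $\phi_i$. By Theorem~\ref{th:ltlaut}, the projection of $\rho$ onto $\sfT^{A_{\phi_i}}$ admits an accepting run $q_0 q_1 \cdots$ in $\AA_{\phi_i}$. Writing $m_j^t$ for the memory state of $\sigma_j$ after reading $\rho[\ldots t-1]$, I exhibit the play in $G[\sigma_{-i}]$ whose vertices are $v_t := (q_t,(m_j^t)_{j\neq i})$ and whose labels are $X_t := \rho[t]$. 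Taking $Y_t := X_t|_{A_{\phi_i}}$ witnesses condition $(i)$; since $\rho$ is induced by $(\tau_i,\sigma_{-i})$, the equality $X_t|_{A_j} = \ch{\sigma}_j(m_j^t)$ for $j\neq i$ yields $Z_t\subseteq X_t$ and consistency over $A_{\phi_i}$, giving $(ii)$; condition $(iii)$ is the very definition of $m_j^{t+1}$. The Büchi objective is met because $q_t \in F$ infinitely often, and the energy objective holds because the $n$-dimensional weight function along this play coincides coordinate-wise with the compound endowments $\E^{(\tau_i,\sigma_{-i})}_k$, which remain nonnegative by feasibility.

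\textbf{Backward direction.} Conversely, let $\pi$ be a winning strategy in $G[\sigma_{-i}]$, and let $v_0 v_1 \cdots$ and $X_0 X_1 \cdots \in \sfT^A$ be the vertex and label sequences of the unique play consistent with $\pi$. I define $\tau_i$ as the finite-memory strategy whose state tracks the current vertex of $G[\sigma_{-i}]$ (alongside $\pi$'s memory), starts in $v_0$, outputs $X_t|_{A_i}$ at state $v_t$, and transitions to $v_{t+1}$ when $X_t$ is observed. An induction on $t$ shows $\play{(\tau_i,\sigma_{-i})}[t] = X_t$: the $A_i$-part equals $\tau_i$'s choice by definition, condition $(ii)$ forces $X_t|_{A_j}=\ch{\sigma}_j(v_j^t)$, and condition $(iii)$ keeps $v_j^{t+1}$ equal to the updated memory state $m_j^{t+1}$. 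The Büchi condition yields an accepting run of $\AA_{\phi_i}$ on the $A_{\phi_i}$-projection of this play, so $\rho \models \phi_i$; the energy condition gives feasibility. Hence $\out_i((\tau_i,\sigma_{-i}))=1$, and $\tau_i$ is a rational deviation.

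\textbf{Main obstacle.} The delicate point is the synchronisation of three heterogeneous ingredients: the automaton $\AA_{\phi_i}$ (whose alphabet is only $\sfT^{A_{\phi_i}}$), the finite memories of the other players' strategies (whose updates depend on the full valuation in $\sfT^A$), and the feasibility constraint (which must hold for every player, not only the deviator). Construction~\ref{cons:dev} merges the first two into the vertices of $G[\sigma_{-i}]$ and the third into the $n$-dimensional weight function, and conditions $(i)$--$(iii)$ are precisely what is needed so that edges of $G[\sigma_{-i}]$ correspond bijectively to one-step dynamics of arbitrary unilateral deviations; in particular, the consistency clause of $(ii)$ reconciles the automaton's guesses on $A_{\phi_i} \cap A_j$ with the prescribed choices of $\sigma_j$. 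The nondeterminism of $\AA_{\phi_i}$ is thereby absorbed into the edge relation, so both directions reduce to the inductive arguments sketched above without any further complication.
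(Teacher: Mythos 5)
Your proof is correct and follows exactly the argument that Construction~\ref{cons:dev} is designed for: the paper defers the detailed proof to its full version, but your two directions (play-to-play translation via an accepting run of $\AA_{\phi_i}$, and strategy extraction with the inductive synchronisation of labels, memories and weights) match the stated intuition and handle the genuinely delicate points, namely the consistency of the automaton's letter with the non-deviators' choices and the fact that feasibility constrains all $n$ coordinates of the energy. The only cosmetic remark is that in the backward direction you need not (and, if the winning strategy $\pi$ has infinite memory, cannot immediately) claim that $\tau_i$ is finite-memory; a rational deviation is allowed to be an arbitrary strategy, so this does not affect correctness.
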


As a consequence we obtain the core property for the existence of our
$\pspace$ algorithm.

\begin{restatable}{proposition}{propConsTwo}
  \label{prop:onthefly}
  Let $\sigma$ be a finite memory profile, and $i$ be a player such
  that $\out_i(\sigma) = 0$.  We can check whether $i$ has a rational
  deviation in $\pspace$.
\end{restatable}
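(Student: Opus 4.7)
By Lemma~\ref{lm:DevStrat}, deciding whether player $i$ has a rational deviation reduces to deciding whether the one-player game $G[\sigma_{-i}]$ admits a winning strategy for the $\enbuchi$ objective with initial credit $e$. The plan is to implement this check on-the-fly over $G[\sigma_{-i}]$, since that graph has up to $2^{O(|\phi_i|)}\cdot\prod_{j\neq i}|M_j|$ vertices owing to the LTL-to-Büchi translation of $\phi_i$ and cannot be built explicitly in polynomial space.

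First, I would observe that $G[\sigma_{-i}]$ admits polynomial-space local access. Each vertex $v=(q,m_{-i})$ is stored in polynomial bits: $q$ is a state of the standard tableau Büchi automaton for $\phi_i$ (representable in $O(|\phi_i|)$ bits), and each memory state $m_j$ fits in $O(\log|M_j|)$ bits. Verifying that $(u,X,v)$ is a transition reduces to checking the three items of Construction~\ref{cons:dev}: item~(i) by guessing the subvaluation $Y$ and consulting the local transition relation of the tableau automaton, item~(ii) by a syntactic consistency test, and item~(iii) by looking up each $\up{\sigma}_j$. The accepting flag of $v$ and the weight of the outgoing edge are computed directly from $q$ and $\cost$.

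Second, I would invoke the algorithm for one-player Energy Büchi games of~\cite{ChatterjeeRR12}, which decides the existence of a winning strategy through a nondeterministic search for a witness: a pivot vertex $v^\star$ reachable from the initial vertex of $G[\sigma_{-i}]$, together with a cycle through $v^\star$ that visits the Büchi set and whose cumulative weight is componentwise $\le\set{0}^n$, with all prefix energy sums along the initial path followed by the first cycle iteration staying componentwise $\le e$. Because the cycle weight is componentwise nonpositive, surviving the first iteration guarantees survival forever, so this witness suffices. The search accesses the game graph purely through local successor queries; combined with the on-the-fly vertex representation above, it yields a nondeterministic procedure that maintains only the current vertex, the current energy vector, the pivot $v^\star$, a Büchi flag, the accumulated cycle weight, and a path-length counter.

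The main obstacle is confirming that every counter fits in polynomial space despite $|V|$ being exponential. The path-length counter takes $\log|V|=O(|\phi_i|+\sum_{j\neq i}\log|M_j|)$ bits. Every edge weight in $G[\sigma_{-i}]$ is bounded in absolute value by $|A|\cdot\max_{p,b}|c(p,b)|$, so along a witness of length at most $2|V|$, the energy vector and the cycle-weight accumulator fit in a polynomial number of bits. The overall nondeterministic procedure therefore runs in polynomial space, and Savitch's theorem yields the $\pspace$ upper bound announced in the proposition.
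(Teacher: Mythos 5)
Your overall route is the same as the paper's: reduce via Lemma~\ref{lm:DevStrat} to solving the one-player $\enbuchi$ game $G[\sigma_{-i}]$, observe that this game is exponential only because of the LTL-to-B\"uchi translation but admits polynomial-space local access to vertices, edges, weights and the accepting set, and then run a nondeterministic on-the-fly search for a lasso witness (reachable pivot in $F$ plus a componentwise nonpositive closed walk through it, with all prefix energies nonnegative), concluding by $\npspace = \pspace$. The witness shape itself is sound: by Dickson's lemma applied to the energy vectors at successive visits of a recurring $F$-vertex, every winning play can be converted into such a lasso, and conversely pumping a nonpositive cycle whose first iteration survives keeps the energy nonnegative forever.

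There is one concrete flaw: the claim that the witness can be taken of length at most $2|V|$. For multi-weighted graphs with binary-encoded weights this is false even in dimension one --- take $F=\{a\}$ with edges $a\to b$ of cost $M$, $b\to b$ of cost $-1$, $b\to a$ of cost $0$; the shortest nonpositive closed walk through $a$ has length $M+2$, exponential in the bit-size of the input while $|V|=2$. In $G[\sigma_{-i}]$ the edge weights are sums of the input costs $c(p,\cdot)$, so the same phenomenon occurs: the nonpositive cycle may have to interleave several simple cycles with multiplicities that depend on the weight magnitudes. A nondeterministic machine that cuts off its guess at $2|V|$ steps would therefore wrongly reject some positive instances. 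The repair is routine and does not endanger the $\pspace$ bound: one shows (by bounding minimal solutions of the associated system of linear inequalities, or by a Rackoff-style argument) that if a witness exists then one exists of length at most $2^{p(|\ebg|+|\sigma|)}$ for some polynomial $p$; the length counter and the energy accumulators then still occupy polynomially many bits, which is all the argument needs. You should state and justify that exponential bound explicitly rather than the $2|V|$ one.
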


\subsection{Proof of Theorem~\ref{th:NEM-PSPACEc}}

We recall Theorem~\ref{th:NEM-PSPACEc}
\thmMain*

\begin{proof}
  If the profile is not feasible, return ``no''.  Otherwise, guess a
  possible deviator $i$ (among the players with null payoff) and check
  whether she has a winning strategy in $G[\sigma_{-i}]$. Return
  ``no'' iff she has a winning
  strategy. Lemma~\ref{prop:feasCor} and Lemma~\ref{lm:DevStrat}
  justify the correctness. Proposition~\ref{prop:feasCheck} and
  Proposition~\ref{prop:onthefly} justify the
  upper-bound complexity.

%% Thanks to Proposition~\ref{prop:feasCheck} and
%%   Proposition~\ref{prop:onthefly} the $\pspace$ upper-bound follows.

  % we can perform line~1 of
  % Algorithm~\ref{alg:nemembership} in $\pspace$. Thanks to
  % Proposition~\ref{prop:onthefly} we can perform line~4 in $\pspace$
  % as well. The correction of Algorithm~\ref{alg:nemembership} follows
  % from the correction of both Construction~\ref{cons:feas}, and
  % Construction~\ref{cons:dev} (c.f. Lemma~\ref{prop:feasCor}, and
  % Lemma~\ref{lm:DevStrat}). 

  To establish the hardness, one needs to notice that any BG is an EBG
  with endowment $\set{0}^N$ and
  $c : A\times\sfT \to \set{0}$.  Thus the $\pspace$ lower
  bound established in~\cite[Prop.~2]{GutierrezHW15} holds for EBGs
  with LTL specifications. Since the proof is a reduction from LTL
  satisfiability to one-player iterated boolean games, NEM is hard
  even when there is only one player.
\end{proof}

%%% Local Variables:
%%% mode: latex
%%% TeX-master: "m"
%%% End:

% \section{Equilibria Synthesis in Boolean Electric Games}
% \label{sec:synth}
% \input{synth}

\section{Resource redistributions}
\label{sec:engineering}

Having characterised the complexity of the problem of deciding whether
a strategy profile of an iterated EBG is a Nash equilibrium, we will
see how we can easily tackle derived decision problems for engineering
Electric Boolean Games. 

A resource redistribution for an EBG $\BB = (N, \Sigma, \Phi, c, e )$
is an endowment function $e': N \to \NN$ such that
\[\sum_{i \in N} e(i) = \sum_{i \in N} e'(i).\]

\begin{remark}
  \label{prop:number-redistributions}
  Let an EBG $\BB = (N, \Sigma, \Phi, c, e )$. There is finite number
  of resource redistributions for $\BB$.
\end{remark}

In~\cite{HarrensteinTW15}, the authors studied the problems of
determining whether there is a resource redistribution such that a
strategy profile is a Nash Equilibrium (rational construction), and of
determining whether there is a resource redistribution such that a
strategy profile is not a Nash Equilibrium (rational elimination). For
the iterated setting we propose the following decision problems.

\begin{definition}[Construction and elimination]
  Let $\BB$ be an electric boolean game, and $\profile$ be a finite
  memory strategy profile. The Rational Construction (RC) problem asks
  whether there is a resource redistribution such that $\profile$ is a
  Nash equilibrium.The Rational Elimination (RE) problem asks whether
  there is a resource redistribution such that $\profile$ is not a
  Nash equilibrium.
\end{definition}
%% \begin{definition}[Rational Construction]
%%   Let $\BB$ be an electric boolean game, and $\profile$ be a finite
%%   memory strategy profile. The Rational Construction (RC) problem
%%   asks whether there is a resource redistribution such that
%%   $\profile$ is a Nash equilibrium.
%% \end{definition}

%% \begin{definition}[Rational Elimination]
%%   Let $\BB$ be an electric boolean game, and $\profile$ be a finite
%%   memory strategy profile. The Rational Elimination (RE) problem
%%   asks whether there is a resource redistribution such that
%%   $\profile$ is not a Nash equilibrium.
%% \end{definition}

%% It is now easy to see that we have the following result.

\begin{restatable}{theorem}{thmEng}\label{th:RC-RE-PSPACEc}
  The RC problem and the RE problem are $\pspace$-complete.
\end{restatable}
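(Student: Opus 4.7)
The plan is to leverage the $\pspace$-completeness of NEM (Theorem~\ref{th:NEM-PSPACEc}) as a black box, wrapping it with a nondeterministic guess of the redistribution for membership, and reducing from NEM (resp.\ its complement) for hardness. I do not anticipate any real obstacle here: the only observation needed is that a redistribution admits a polynomially-sized representation, so that both the guess and the reduction fit inside $\pspace$.

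For the upper bound, I would proceed as follows. A redistribution $e' : N \to \NN$ consists of $n$ nonnegative integers summing to $E = \sum_{i \in N} e(i)$; each $e'(i)$ is bounded by $E$ and thus admits a polynomially-sized binary representation. For RC, the algorithm nondeterministically guesses such an $e'$, forms the EBG $\BB' = (N, A, \Phi, c, e')$, and invokes the $\pspace$ procedure of Theorem~\ref{th:NEM-PSPACEc} to decide whether $\sigma \in \NE(\BB')$. This is an $\npspace$ procedure, hence in $\pspace$ by Savitch's theorem. For RE, the same scheme applies, but the subroutine tests $\sigma \notin \NE(\BB')$, which is also decidable in $\pspace$ since $\pspace$ is closed under complement.

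For the lower bound, I would exploit the fact that the hardness part of Theorem~\ref{th:NEM-PSPACEc} already holds for one-player iterated boolean games, where the cost function is identically zero and $e \equiv 0$. For such instances the total endowment is $0$, so the unique resource redistribution is $e' = e$. Consequently, an instance $(\BB, \sigma)$ is a yes-instance of RC iff it is a yes-instance of NEM, and the identity mapping is a logspace reduction witnessing $\pspace$-hardness of RC. The same transformation, interpreted with respect to the complement of NEM, yields $\pspace$-hardness of RE, once again appealing to closure of $\pspace$ under complement.
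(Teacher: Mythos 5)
Your proposal is correct and follows essentially the same route as the paper: a nondeterministic guess of the (polynomially representable) redistribution wrapped around the $\pspace$ NEM procedure for the upper bounds, and hardness inherited from the one-player, zero-cost, zero-endowment instances of NEM (where the redistribution is unique), using closure of $\pspace$ under complement for RE.
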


The non-deterministic procedures outlined in the proof of
Theorem~\ref{th:RC-RE-PSPACEc} are sufficient to characterise an
optimal upper-bound of the problems. In the case of RE, there exists a
more practical deterministic algorithm. Indeed, the result
of~\cite[Corr.~4]{HarrensteinTW15} carries over in the iterated
setting.

\begin{restatable}{proposition}{propEng}
\label{prop:eng}
  Let an endowment $e$ be given. The endowment $e^i$ is the resource
  redistribution of $e$ such that all resources are allocated to
  player~$i$. The strategy profile $\sigma$ is eliminable in
  $\BB^{c,e}$ iff for some player~$i$,
  $\sigma \not \in \NE(\BB^{c,e^i})$.
\end{restatable}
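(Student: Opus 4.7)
The plan is to prove the two directions separately. The $\Leftarrow$ direction is immediate: $e^i$ is itself a resource redistribution of $e$, so any witness that $\sigma \notin \NE(\BB^{c,e^i})$ for some player $i$ is directly a witness of eliminability.

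For the $\Rightarrow$ direction, I will fix a redistribution $e'$ of $e$ with $\sigma \notin \NE(\BB^{c,e'})$ and case-split on which clause of Definition~\ref{def:ne} fails at $e'$. If $\sigma$ is infeasible under $e'$, then some player $k$ has $\E_k^\sigma(t) < 0$ at some $t$, which means the cumulative cost of $\sigma_k$ along $\play{\sigma}$ strictly exceeds $e'(k) \ge 0$ at that step, hence is strictly positive. Picking any $i \neq k$ (the case $|N|=1$ is trivial because the only redistribution of $e$ is $e$ itself), we get $e^i(k)=0$ and $\sigma$ is infeasible under $e^i$ too, so $\sigma \notin \NE(\BB^{c,e^i})$. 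Otherwise $\sigma$ is feasible under $e'$ and some player $j$ has a rational deviation $\tau_j$; the natural candidate is $i = j$. If $\sigma$ is infeasible under $e^j$ the previous case applies, so I may assume $\sigma$ is feasible under $e^j$ and aim to show that $\tau_j$ remains a rational deviation. For the deviator herself, feasibility of $(\tau_j,\sigma_{-j})$ is preserved because $e^j(j) = \sum_k e(k) \ge e'(j)$, and the LTL condition $\play{(\tau_j,\sigma_{-j})} \models \phi_j$ is independent of the endowment.

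The hard part will be the feasibility of the non-deviator players in the deviation play under $e^j$: each $k \neq j$ now starts with endowment zero, so I must argue that $\sigma_k$'s cumulative cost stays non-positive along $\play{(\tau_j,\sigma_{-j})}$. Feasibility of $\sigma$ under $e^j$ only supplies this for the original play $\play{\sigma}$, and because $\sigma_k$ is history-dependent it may behave quite differently when it sees $\tau_j$'s atoms in place of $\sigma_j$'s; this is precisely the new subtlety compared to the one-shot case of~\cite[Corr.~4]{HarrensteinTW15}, where costs decompose per-player. I plan to attack it by exploiting the finite-memory representation: the memory state of $\sigma_k$ evolves as a function of the joint valuations, which differ between the two plays only on atoms controlled by $j$, and a careful inspection of the induced transitions should either yield a modification of $\tau_j$ that reroutes $\sigma_k$ through non-positive-weight transitions, or else expose an additional player whose cumulative cost in the original play is already positive, in which case the infeasibility case above supplies the required index $i$. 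Once the proposition is in hand, the deterministic algorithm for RE alluded to in the paper is immediate: iterate the NEM procedure of Theorem~\ref{th:NEM-PSPACEc} over the $|N|$ candidate redistributions $e^i$.
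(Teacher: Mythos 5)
Your right-to-left direction and your handling of the infeasibility case are fine, but the argument stalls exactly where you say it does, and that gap cannot be closed: the ``hard part'' you isolate --- feasibility of the non-deviators under $e^j$ along the \emph{deviation} play --- is not a technicality but a genuine failure of the statement's left-to-right direction in the iterated setting. Neither of your two escape routes works. A unilateral deviation by player~$j$ can force another player~$k$'s history-dependent strategy onto transitions with strictly positive cost that never occur along $\play{\sigma}$, so the deviation can be feasible only when $j$ and $k$ \emph{simultaneously} hold positive endowments --- something no concentrated redistribution $e^i$ provides, while the original play remains cost-free and hence feasible (and unimprovable) under every $e^i$.

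Concretely, take $N=\set{1,2}$, $A_1=\set{p}$, $A_2=\set{q}$, $c(p,\top)=c(q,\top)=1$, $c(p,\bot)=c(q,\bot)=0$, $\phi_1=\sfF p$, $\phi_2$ a tautology, and $e=(2,0)$. Let $\sigma_1$ always play $\bot$, and let $\sigma_2$ play $\bot$ initially and thereafter play $\top$ exactly when $p$ was $\top$ in the previous round. Then $\play{\sigma}$ incurs cost $0$ everywhere, is feasible under every redistribution, and falsifies $\sfF p$. Under the redistribution $(1,1)$, the deviation ``set $p=\top$ once, then $\bot$ forever'' is feasible (each compound endowment dips to $0$ exactly once) and satisfies $\phi_1$, so $\sigma\notin\NE(\BB^{c,(1,1)})$ and $\sigma$ is eliminable. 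Yet under $e^1=(2,0)$ every deviation achieving $\sfF p$ drives $\E_2$ to $-1$, and under $e^2=(0,2)$ it drives $\E_1$ to $-1$; since $\out_2(\sigma)=1$ already, we get $\sigma\in\NE(\BB^{c,e^1})$ and $\sigma\in\NE(\BB^{c,e^2})$. This also pinpoints why~\cite[Corr.~4]{HarrensteinTW15} does not transfer: in the one-shot game a unilateral deviation leaves the other players' costs unchanged, which is precisely the property your argument would need and which history-dependent strategies destroy. No completion of your sketch (rerouting $\tau_j$, or blaming a player whose cumulative cost is already positive on $\play{\sigma}$) can succeed, because in the example above $\tau_j$'s triggering of player~$2$'s cost is unavoidable for any deviation satisfying $\sfF p$, and every player's cumulative cost on $\play{\sigma}$ is identically~$0$.
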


This hints at a ``more practical'' algorithm to solve RE: for each
player~$i$, test whether $\sigma \not\in \NE(\BB^{c,e^i})$. Return
``yes'' as soon as a test succeeds. Return ``no'' when all $|N|$ tests
failed.

\section{Conclusion}
In this paper we presented a preliminary result on the Electric
Boolean Games introduced in~\cite{HarrensteinTW15}. We considered the
iterated setting where the objectives are specified as LTL
formulas. We showed the $\pspace$-completness of Nash equilibrium
membership, thus matching the complexity bounds
of~\cite{GutierrezHW15} for the non quantitative setting of iterated
Boolean Games. In order to establish this result, we extended existing
techniques for plain LTL to an extension of LTL with electric
constraints.  This result is used to characterise the complexity of
two problems of resource redistribution that can serve at
social-welfare engineering.

As future research direction, we plan to investigate the Nash
equilibrium non-emptyness and Nash equilibrium synthesis. We believe that
Construction~\ref{cons:dev} can be extended in order to construct a
concurrent game with the property that it contains a pure Nash
equilibrium iff the electric boolean game does. To the best of our
knowledge, the obtained class of concurrent games is rather novel and
has yet to be studied.

%%% Local Variables:
%%% mode: latex
%%% TeX-master: "m"
%%% End:

%\bibliographystyle{splncs_srt} 
\bibliographystyle{eptcs} 
\bibliography{ourbiblio.bib}

% \newpage
% \appendix

% \input{appendix}

\end{document}